\newcommand{\gr}[1]{\mathcal{#1}}
\newcommand{\sset}[1]{\mathcal{#1}}
\DeclareMathOperator*{\argmax}{\arg\!\max}
\DeclareMathOperator*{\argmin}{\arg\!\min}
\def\polylog{\operatorname{polylog}}
\begin{document}
\title{Extracting Densest Sub-hypergraph with Convex Edge-weight Functions \thanks{This work is partially supported by the National Natural Science Foundation of China under Grant No. 61802049.}}
%
%
\author{Yi Zhou \inst{1,2} \orcidID{0000-0002-9023-4374} \and
Shan Hu \inst{1}\and
Zimo Sheng\inst{1}}
\authorrunning{Y. Zhou et al.}
%

\institute{School of Computer Science and Engineering, University of Electronic Science and Technology of China, Chengdu 611731, China \and
Zhejiang Jinggong Steel Building Group, Shaoxing  312030, Zhejiang, China
\\
\email{zhou.yi@uestc.edu.cn},
\email{hu.shan@std.uestc.edu.cn}, 
\email{1491858607@qq.com}
}

\maketitle              
\begin{abstract}

The densest subgraph problem (DSG) aiming at finding an induced subgraph such that  the average edge-weights of the subgraph is maximized, is a well-studied problem.
However, when the input graph is a hypergraph, the existing notion of DSG fails to capture the fact that a hyperedge partially belonging to an induced sub-hypergraph is also a part of the  sub-hypergraph.
To resolve the issue, we suggest a function $f_e:\mathbb{Z}_{\ge0}\rightarrow \mathbb{R}_{\ge 0}$ to represent the partial edge-weight of a hyperedge $e$ in the input  hypergraph $\gr{H}=(V,\sset{E},f)$ and formulate a \emph{generalized densest sub-hypergraph problem} (GDSH) as $\max_{S\subseteq V}\frac{\sum_{e\in \sset{E}}{f_e(|e\cap S|)}}{|S|}$. We demonstrate that, when all the edge-weight functions are non-decreasing convex, GDSH can be solved in polynomial-time by the linear program-based algorithm, the network flow-based algorithm and the greedy $\frac{1}{r}$-approximation algorithm where $r$ is the rank of the input hypergraph.
Finally, we investigate the computational tractability of GDSH where some edge-weight functions are non-convex. 
\keywords{Densest subgraph problem \and Hypergraph \and Convex function.}
\end{abstract}

%
%
%
\section{Introduction}
\label{sec_introduction}
The \emph{densest subgraph problem} (DSG) is a well-known problem in research communities of operations research, combinatorial optimization, data mining and so on.
Given an edge-weighted graph $\gr{G}=(V, \sset{E}, w)$ with a vertex set $V$, an edge set $\sset{E}$ and an edge-weight function $w:\sset{E}\rightarrow \mathbb{R}_{\ge 0}$, DSG asks us to maximize the density of the subgraph induced by a vertex set $S\subseteq V$, i.e., $\max_{S\subseteq V}\frac{\sum_{e\in \sset{E}}{w(e)}}{|S|}$. 
Applications of DSG range from web community detection \cite{dourisboure2007extraction,chen2010dense}, network motif clustering \cite{benson2016higher,tsourakakis2015k} to information recommendation \cite{rahman2017density}.
For solving DSG, there exists a network flow-based exact algorithm by Goldberg \cite{goldberg_finding_1984,chekuri2022densest}, a linear program-based algorithm by Charikar \cite{charikar2000greedy} and a linear-time $\frac{1}{2}$-approximation algorithm in \cite{asahiro2000greedily,khuller2009finding}.

On the other hand, \emph{hypergraph} is attracting increasing attentions in recent years.
The hypergraph is a generalization of the normal graph in which a hyperedge consists of arbitrary positive number of vertices.
An edge-weighted hypergraph is defined as $\mathcal{H}=(V,\sset{E}, w)$ where $V$ is a vertex set, $\sset{E}$ is  a hyperedge set, and $w:\sset{E}\rightarrow \mathbb{R}_{\ge 0}$ is an edge-weight function assigning each hyperedge a positive weight.
The densest subgraph problem in an edge-weighted hypergraph, i.e. the \emph{densest sub-hypergraph problem} (DSH), is known to be formulated as follows.

\begin{problem}[DSH as in \cite{tsourakakis2015k,hu2017maintaining}]
\label{problem_old}
Given a hypergraph $\gr{H}=(V,\sset{E}, w)$,  DSH asks for a sub-hypergraph induced by $S\subseteq V$ such that the average edge-weight of the sub-hypergraph i.e., $\frac{\sum_{e\subseteq S}w(e)}{|S|}$, is maximized.
\end{problem}

As far as we know, DSH was initially appeared as a generalization of the \emph{densest $r$-clique problem (D$r$C)} in \cite{tsourakakis2015k}. 
Given a graph $\gr{G}=(V,\sset{E})$, D$r$C asks for a subset of vertices $S$ such that the average number of $r$-cliques (a clique of size $r$) induced by $S$ is maximized. 
Clearly, D$r$C can be reduced to DSH by building a $r$-uniform hypergraph in which each hyperedge represents a $r$-clique.
For D$r$C, a polynomial exact algorithm and a $\frac{1}{r}$-approximation algorithm were introduced in \cite{tsourakakis2015k}, and a sampling algorithm was given  in \cite{mitzenmacher2015scalable}.
In \cite{hu2017maintaining}, Hu \emph{et al} finally remove the assumption that the input hypergraph is $r$-uniform and formalize DSH as Problem \ref{problem_old} .
They demonstrated that the linear program, network flow and approximation algorithms for DSH.
Recently, a much faster $(1-\epsilon)$ approximation algorithm based on max flow is given in \cite{chekuri2022densest} for DSH.


In this paper, we study a more generalized version of the densest sub-hypergraph problem rather than continue working with the existing model.
We observed that in Problem \ref{problem_old}, a hyperedge $e$ is counted as a part of sub-hypergraph induced by $S$ only when $e$ is a subset of $S$. 
However, in some graph applications like \cite{li2017inhomogeneous}, if a hyperedge $e$ intersects with $S$, i.e., $e\not\subset S$ and $e\cap S\neq \emptyset$, $e$ is  \emph{partially} belong to the sub-hypergraph induced by $S$.
Therefore, the fact that the weight of a sub-hypergraph induced by $S$ should contain a partial weight of the hyperedge that intersects with $S$ is not captured by the definition of DSH.
In order to fix this issue, we introduce an edge-weight function $f_e: \{0,...,|e|\} \rightarrow \mathbb{R}_{\ge 0}$ for each hyperedge $e$ in the input hypergraph, and then define the following \emph{generalized densest sub-hypergraph problem} (GDSH).

\begin{problem}[GDSH in this paper]
\label{problem_new}
Given an edge-weighted hypergraph $\mathcal{H}=(V,\sset{E}, f)$ with vertex set $V$ and hypergedge set $\sset{E}$, $f_e: \{0,...,|e|\} \rightarrow \mathbb{R}_{\ge 0}$ being an edge-weight function for each $e\in \sset{E}$, GDSH asks for a set of vertices $S\subseteq V$ such that i.e., $\frac{\sum_{e\in \sset{E}}f_e(|e\cap S|)}{|S|}$, is maximized.
\end{problem}

It is clear that GDSH generalizes the DSH problem.
For example, if $\mathcal{H}=(V,\sset{E},w)$ is the input hypergraph for DSH,
we can build a hypegraph $\mathcal{H}'=(V,\sset{E},f')$ such that $f'_e(i)=0$ when $i<|e|$ and $f'_e(|e|)=w(e)$. 
Then, it is clear that the solution of GDSH with input graph $\mathcal{H}'$ is the same as DSH with input graph $\mathcal{H}$. 
In this sense, GDSH also generalizes existing densest subgraph problems like DSG and D$r$CP.

Since  convex functions are ubiquitous in many applications, in the remaining of the paper, we investigate GDSH with focus on cases where all edge-weight functions are \emph{non-decreasing convex}. It is clear that all the above problems like DSG, D$r$C and DSH are special cases of GDSH with non-decreasing convex edge-weight functions. 
We will use $n$ to denote the vertex number $|V|$, $m$ to denote the edge number $|\sset{E}|$, $p$ to denote $\sum_{e\in \sset{E}}|e|$ and $r$ to denote the rank $\max_{e\in \sset{E}}|e|$ of the  input hypergraph $\mathcal{H}=(V,\sset{E}, f)$.
Note that $p=\sum_{v\in V}{\deg_\gr{H}(v)}$ where $\deg_\gr{H}(v)$ is the \emph{degree} of vertex $v\in V$. 
We also use $\Psi=\sum_{e\in \sset{E}}{f_e(|e|)}$ to denote the whole edge-weight of the hypergraph.
Our main contributions for GDSH when all edge-weight functions are non-decreasing convex functions are summarized as follows.

\begin{itemize}
    \item \textbf{A linear program whose optimal value is equal to the  maximum density of GDSH.}
    The linear program  has $O(mr!)$ inequalities but efficient oracles exist for separation. 
    We show that an optimal solution of GDSH can be easily obtained by solving the linear program.
    \item   \textbf{A network flow-based algorithm which runs in $O(mincut(p,pr)\log\Psi)$ time, $mincut(N,M)$ representing the time of solving minimum $s,t$-cut in directed flow network with $N$ vertices and $M$ arcs.}
    We also show the technique to obtain a  $O(mincut(p,pr)\log(\epsilon^{-1}\log(rm)))$ time $(1-\epsilon)$ approximation algorithm which removes the $\log\Psi$ factor. 
    \item \textbf{A greedy $\frac{1}{r}$ approximation algorithm  with much faster running time $O(pr\log n)$.} 
    With a little relaxation of the greedy strategy, the greedy approximation algorithm can also run in logarithmic iterations under the parallel computing settings.
\end{itemize}

It is worth mentioning that the above three algorithms extend the linear program algorithm, network flow algorithm and greedy algorithm, respectively, in \cite{hu2017maintaining,charikar2000greedy}.
However, the extension is not trivial as. We  only assume the non-decreasing and convexity properties of the edge-weight function in this work, contrary to the existing work that edge-weight functions are uniform and specifically given.
 
For completeness, we lastly study the computational tractability of GDSH when some edge-weight functions are non-convex.
It turns out that when all edge-weight functions are non-decreasing concave, GDSH can be simply solved by selecting a (densest) vertex,
when some edge-weight function are concave, GDSH is shown to be NP-hard by reduction from the max-cut problem.

\section{Properties of Edge-weight Functions}
\label{preliminaries}
Given $\gr{H}=(V,\sset{E}, f)$, the edge weight function $f_e$ is defined on discrete domain ${0,...,|e|}$. 
We first assume that  $f_e$  has \emph{non-decreasing} properties for any $e\in\sset{E}$.

\begin{property}[Non-decreasing]
\label{pro_nondecreasing}
$f_e(i)\le f_e(i+1), \forall i\in \{0,...,|e|-1\}$
\end{property}

This property is a clearly natural in practice.
Without loss of generality, we assume that $f_e(0)=0$. 
If $f_e(0)\neq 0$, we can use $f'_e(i)=f_e(i)-f_e(0)$ to replace $f_e$ without changing the optimal solution of GDSH.

Aside from the non-decreasing property, we also discuss the \emph{convexity} and \emph{concavity} properties. 
As we know, convexity and concavity are common properties for many functions. They play important roles in characterizing the hardness of underlying optimization problems.
\begin{property}[Convexity]
\[
f_e(i)-f_e(i-1) \le f_e(i+1)-f_e(i), \forall i\in \{1,...,|e|-1\}.
\]
\end{property}

\begin{property}[Concavity]
\[
f_e(i)-f_e(i-1) \ge f_e(i+1)-f_e(i), \forall i\in \{1,...,|e|-1\}.
\]
\end{property}

Given an $S\subseteq V$, we use $F(S)=\sum_{e\in \sset{E}}{f_e(|e\cap S|)}$ to represent the \emph{weight of sub-hypergraph induced by $S$}.
Clearly, if $\forall e\in \sset{E}$, $f_e$ is non-decreasing convex (concave), then $F(S)$ is a monotone \emph{supermodular} (\emph{submodular}) function in finite set $V$ (because  submodularity and supermodularity are closed under non-negative linear combination). 
Let us recall the definitions of \emph{supermodularity} and  \emph{submodularity} as bellow.
\begin{property}[Supmodularity]
\[
F(S\cup \{v\})-F(S) \le F(T\cup \{v\})-F(T), \forall S\subseteq T\subset 2^V \mbox{ and } \forall v\in S, v \notin T
\]
\end{property}

\begin{property}[Submodularity]
\[
F(S\cup \{v\})-F(S) \ge F(T\cup \{v\})-F(T), \forall S\subseteq T\subset 2^V \mbox{ and } \forall v\in S, v \notin T
\]
\end{property}

Lastly, we assume that $f_e(i)$ is computed in constant time for any $i\in \{0,...,|e|\}$.
Thus, for any set $S\subseteq V$, $f_e(|e\cap S|)$ is computed in time $O(min\{|e|,|S|\})$ and $F(S)$ is computed in time $O(p)$.


\section{GDSH with Convex Edge-weight Functions}
\label{convex}
In this section, we investigate algorithms for solving GDSH  when every edge-weight function is non-decreasing convex.
Specifically, we show a linear program, a parametric network flow-based algorithm, and a fast greedy approximation in Section \ref{sec_lp}, \ref{sec_flow} and  \ref{approx}, respectively.

\subsection{A Linear Program Approach}
\label{sec_lp}

For a hyperedge $e\in \sset{E}$, let $\sset{P}_e$  be the  set of all permutations of $e$. Given a permutation $\pi\in \sset{P}_e$, $\pi(i)=v$ means that the $i$th vertex of permutation $\pi$ is $v$ and $v\in e$. Then, the linear program for GDSH, i.e., LP-GDSH, is given as follows.

\begin{align}
\mbox{maximize~} &\displaystyle\sum_{e \in \sset{E}}{y_e} & \mbox{(LP-GDSH)} \nonumber\\
\mbox{s.t.~~~~} & \sum_{i=1}^{|e|}(f_e(i)-f_e(i-1))x_{\pi(i)} \ge  y_e & \forall e\in \sset{E}, \forall \pi\in \sset{P}_e \label{ineq1}\\
&\sum_{v\in V} x_v \le 1 & \label{ineq2}\\
&x_v\ge 0, y_e \ge 0 &\forall v \in V, e\in \sset{E} \nonumber
\end{align}

\begin{lemma}
\label{lemma_ineq_minimum}
Let $\mathbf{x}$ be a feasible solution of LP-GDSH.
Then, for any $e\in \sset{E}$, we have $\min_{\pi\in \sset{P}_e}(\sum_{i=1}^{|e|}(f_e(i)-f_e(i-1))x_{\pi(i)}) = \sum_{i=1}^{|e|}(f_e(i)-f_e(i-1))x_{\pi^*(i)}$ where $\pi^*\in \sset{P}_e$ is a permutation that $x_{\pi^*(1)}\ge x_{\pi^*(2)}\ge ...\ge x_{\pi^*(|e|)}$.
\end{lemma}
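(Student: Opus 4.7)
The plan is to recognize the inner expression as a classical rearrangement-type sum and exploit the convexity of $f_e$, which forces the coefficient sequence $c_i := f_e(i)-f_e(i-1)$ to be non-decreasing in $i$. Together with the non-negativity $x_v\ge 0$ guaranteed by feasibility, this puts us in exactly the setting where the rearrangement inequality determines the minimum.

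First I would record the observation that, by the convexity property of $f_e$, the marginals $c_1\le c_2\le\cdots\le c_{|e|}$ form a non-decreasing sequence. Thus $\sum_{i=1}^{|e|} c_i\, x_{\pi(i)}$ is the inner product of a fixed non-decreasing sequence $(c_i)$ with a permutation of the non-negative numbers $\{x_v : v\in e\}$. The classical rearrangement inequality then says that this inner product is minimized precisely when the other sequence is arranged in the opposite (i.e.\ non-increasing) order, which is the statement we want with $\pi^*$.

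To keep the argument self-contained I would supply the standard adjacent-swap exchange proof. Suppose $\pi\in\sset{P}_e$ is any permutation and there exist consecutive positions $i<i+1$ with $x_{\pi(i)}<x_{\pi(i+1)}$ (an ``inversion'' with respect to $\pi^*$). Let $\pi'$ be obtained from $\pi$ by swapping $\pi(i)$ and $\pi(i+1)$. A direct computation gives
\[
\sum_{k=1}^{|e|}c_k x_{\pi'(k)} - \sum_{k=1}^{|e|}c_k x_{\pi(k)}
= (c_{i+1}-c_i)\bigl(x_{\pi(i)}-x_{\pi(i+1)}\bigr) \le 0,
\]
since $c_{i+1}\ge c_i$ and $x_{\pi(i)}\le x_{\pi(i+1)}$. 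Hence swapping any inversion can only weakly decrease the sum. Starting from an arbitrary $\pi$ and iteratively eliminating adjacent inversions (a finite process, essentially bubble sort), we reach $\pi^*$ with a sum no larger than the original, which proves $\pi^*$ attains the minimum.

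I do not anticipate a genuine obstacle here; the only thing to be careful about is making sure both ingredients of the rearrangement argument are in place, namely (i) convexity supplying monotone marginals $c_i$ and (ii) feasibility of $\mathbf{x}$ supplying $x_v\ge 0$ so that the pairwise swap inequality goes the right way. Neither the constraint $\sum_v x_v\le 1$ nor the constraints \eqref{ineq1} themselves are used in the argument; only the sign constraints on $\mathbf{x}$ and the structural property of $f_e$ are needed.
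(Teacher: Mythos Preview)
Your proof is correct and follows essentially the same exchange argument as the paper: the paper argues by contradiction, swapping an arbitrary inverted pair $i<j$, while you iterate adjacent swaps bubble-sort style, but the core idea---convexity forces the coefficients $c_i=f_e(i)-f_e(i-1)$ to be non-decreasing, whence the rearrangement inequality pins down the minimizing permutation---is identical. One small remark: the constraint $x_v\ge 0$ is not actually needed for the swap inequality $(c_{i+1}-c_i)(x_{\pi(i)}-x_{\pi(i+1)})\le 0$; only the relative orderings of the two sequences matter, so feasibility plays no role here at all.
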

\begin{proof}
We justify the statement by contradiction.
Assume that $\pi' \in \sset{P}_e$ is a minimum permutation, i.e., $\sum_{i=1}^{|e|}(f_e(i)-f_e(i-1))x_{\pi'(i)}=min_{\pi\in\sset{P}_e}\sum_{i=1}^{|e|}(f_e(i)-f_e(i-1))x_{\pi(i)})$,  but there exists $i,j$ that $ 1\le i<j\le |e|, x_{\pi'(i)} < x_{\pi'(j)}$.
As $f_e$ is a non-decreasing convex function, we have $0\le f_e(i)-f_e(i-1)\le f_e(j)-f_e(j-1)$.
Then, we have $(f_e(i)-f_e(i-1))x_{\pi'(j)} + (f_e(j)-f_e(j-1))x_{\pi'(i)} < (f_e(i)-f_e(i-1))x_{\pi'(i)} + (f_e(j)-f_e(j-1))x_{\pi'(j)}$.
In other words, we can decrease $\sum_{i=1}^{|e|}(f_e(i)-f_e(i-1))x_{\pi'(i)}$ by exchanging $\pi'(i)$ and $\pi'(j)$,
which contradicts the assumption that $\pi'$ is the minimum permutation.
\end{proof}

\begin{theorem}
\label{thm_core_lp}
The following statements hold for LP-GDSH.
\begin{enumerate}
  \item For any $S\subseteq V$, there is  a feasible solution $(\mathbf{x,y})$ of LP-GDSH such that  $\sum_{e \in \sset{E}}{y_e}=\frac{F(S)}{|S|}$.
  \item  Let $\gamma^*$ be the optimal objective value  of LP-GDSH. Then, there is a vertex set $S\subseteq V$ such that $\gamma^*\le \frac{F(S)}{|S|}$.
\end{enumerate}
Therefore, the optimal solution of LP-GDSH is equal to the maximum density of GDSH.
\end{theorem}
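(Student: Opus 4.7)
\medskip

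\noindent\textbf{Proof proposal.} The plan is to verify the two statements separately; together they sandwich the LP optimum against the GDSH optimum.

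For statement (1), given a fixed $S\subseteq V$ I would exhibit an explicit feasible point. Set $x_v = 1/|S|$ if $v\in S$ and $x_v=0$ otherwise, so constraint (\ref{ineq2}) holds with equality. For each $e$, let $\pi^*\in\sset{P}_e$ be an ordering that puts the vertices in $e\cap S$ first. Then $x_{\pi^*(i)}=1/|S|$ for $i\le |e\cap S|$ and $x_{\pi^*(i)}=0$ afterwards, so by Lemma \ref{lemma_ineq_minimum} the minimum of the left-hand side of (\ref{ineq1}) over $\pi$ telescopes to $f_e(|e\cap S|)/|S|$. Setting $y_e := f_e(|e\cap S|)/|S|$ makes (\ref{ineq1}) tight for the minimizing permutation (and therefore satisfied for every permutation), and summing $y_e$ gives exactly $F(S)/|S|$.

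For statement (2), the idea is a threshold-rounding argument on an optimal LP solution $(\vect{x},\vect{y})$. Sort the distinct positive values of $x_v$ and, for $t>0$, let $S(t)=\{v\in V : x_v\ge t\}$; the candidate rounded sets are the finitely many $S(t)$ obtained as $t$ sweeps through those values. The key identity is
\begin{equation*}
\int_0^{\infty} f_e(|e\cap S(t)|)\,dt \;=\; \sum_{i=1}^{|e|}\bigl(f_e(i)-f_e(i-1)\bigr)\,x_{\pi^*(i)},
\end{equation*}
where $\pi^*\in\sset{P}_e$ orders the vertices of $e$ by decreasing $x$-value. This follows by writing $f_e(k)=\sum_{i=1}^{k}(f_e(i)-f_e(i-1))$, swapping sum and integral, and using $\int_0^\infty \mathbb{1}[|e\cap S(t)|\ge i]\,dt = x_{\pi^*(i)}$. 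By Lemma \ref{lemma_ineq_minimum} combined with constraint (\ref{ineq1}) applied to $\pi^*$, the right-hand side is $\ge y_e$. Summing over $e$ and using $\int_0^\infty |S(t)|\,dt=\sum_v x_v\le 1$ from (\ref{ineq2}) yields
\begin{equation*}
\int_0^\infty F(S(t))\,dt \;\ge\; \gamma^* \;\ge\; \gamma^*\!\int_0^\infty |S(t)|\,dt,
\end{equation*}
so the integrand $F(S(t))-\gamma^*|S(t)|$ is nonnegative on average, and hence nonnegative for some threshold $t$. This $S(t)$ is the desired set with $F(S(t))/|S(t)|\ge \gamma^*$.

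Combining (1) applied to the optimal $S$ for GDSH with (2) gives that the LP optimum and the GDSH optimum coincide; moreover, the rounding in (2) is constructive, which is what later sections will exploit to recover an actual densest sub-hypergraph. The main obstacle is the integral identity in step (2): it is where non-decreasing convexity really enters, via Lemma \ref{lemma_ineq_minimum}, to guarantee that the sorted permutation $\pi^*$ is simultaneously the minimizer of the LP constraint and the one produced by the level-set sweep. Once that identity is in place, the averaging step and the construction in (1) are routine.
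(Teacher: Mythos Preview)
Your proposal is correct and follows essentially the same route as the paper: the same indicator construction $x_v=\mathbb{1}[v\in S]/|S|$, $y_e=f_e(|e\cap S|)/|S|$ for statement (1), and the same level-set/threshold rounding $S(t)=\{v:x_v\ge t\}$ together with the integral identity $\int_0^\infty f_e(|e\cap S(t)|)\,dt=\sum_i(f_e(i)-f_e(i-1))x_{\pi^*(i)}$ for statement (2). The only cosmetic difference is that the paper phrases the averaging step as a proof by contradiction, whereas you state it directly.
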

\begin{proof}
\textbf{Proof of the first statement.} For any $S\subseteq V$, we construct a $\mathbf{x}$ such that $x_v=\frac{1}{|S|}$ if $v\in S$ and $x_v=0$ otherwise. Clearly, $\mathbf{x}$ satisfies Inequality \ref{ineq2}. We also construct $\mathbf{y}$ with $y_e=\frac{f_e(|e\cap S|)}{|S|}$ for any $e\in \sset{E}$. Then $\sum_{e \in \sset{E}}y_e$ is equal to $\frac{F(S)}{|S|}$. Now, let use  verify that this $(\mathbf{x},\mathbf{y})$ satisfies Inequality \ref{ineq1}. By Lemma \ref{lemma_ineq_minimum}, the left-hand side of Inequality \ref{ineq1} is at least  $\sum_{i=1}^{|e|}(f_e(i)-f_e(i-1))x_{\pi^*(i)}$ where $\pi^*\in {P}_e$ satisfies $\pi^*(i)>\pi^*(i+1)$ for $1\le i\le n-1$. Therefore,
\begin{align*}
&\sum_{i=1}^{|e|}(f_e(i)-f_e(i-1))x_{\pi(i)} \\
&\ge \sum_{i=1}^{|e|}(f_e(i)-f_e(i-1))x_{\pi^*(i)} \\
&=\sum_{i=1}^{|e\cap S|}{(f_e(i)-f_e(i-1))\frac{1}{|S|}} \\
&=\frac{f_e(|e\cap S|)}{|S|} \\
&=y_e.
\end{align*}
Hence, the first statement holds.

\textbf{Proof of the second statement.}
Let $(\mathbf{x}^*,\mathbf{y}^*)$ be an optimal solution of LP-GDSH.
Define $S_r=\{v:x^*_v\ge r\}$.
We claim that there exists $r\in[0,1]$ such that $\frac{F(S_r)}{|S_r|}\ge \gamma^*$.
Assume that there is no such $r$. Then we have $F(S_r)<\gamma^*|S_r|$ for any $r\in[0,1]$.
That is to say, $\int_{0}^{\infty}F(S_r)dr < \gamma^*\int_{0}^{\infty}|S_r|dr$.

On the other hand, we have
\[
\gamma^*\int_{0}^{1}|S_r|dr=\gamma^*\sum_{v\in V}{x^*_v}
\]
and
\begin{align*}
&\int_{0}^{1}F(S_r)dr =\sum_{e\in \sset{E}}\int_{0}^{1}f_e(|e\cap S|)dr \\
&=\sum_{e\in\sset{E}}( \sum_{i=1}^{|e|}(f_e(i)-f_e(i-1))x^*_{\pi^*(i)})) \\
&=\sum_{e\in\sset{E}}y^*_e. \\
\end{align*}
Note that the last equation is from the fact that $y^*_e$ is equal to the minimum of $\sum_{i=1}^{|e|}(f_e(i)-f_e(i-1))x^*_{\pi(i)}$ for any permutation $\pi\in \sset{P}_e$.

Hence, we have $\sum_{e\in\sset{E}}y^*_e <\gamma^*\sum_{v\in V}{x^*_v}$ by assumption. However, this contradicts the condition that $\gamma^*$ is optimal value.
Therefore, we conclude that we can definitely find a $r\in [0,1]$ such that $\frac{F(S_r)}{|S_r|}\ge \gamma^*$.
\end{proof}

By the proof of Theorem \ref{thm_core_lp}, we can obtain the optimal solution to GDSH from the optimal LP solution  $(\mathbf{x}^*,\mathbf{y}^*)$ by simply solving $max_{r\in[0,1]}\frac{F(S_r)}{|S_r|}$. 
The number of inequalities in LP-GDSH is $O(mr!)$, but this linear program can be still solved in polynomial time  because Inequality \ref{ineq1} can be efficiently separated by Lemma \ref{lemma_ineq_minimum}.

\noindent \textbf{Remark.} It is clear that LP-GDSH generalizes Charikar's linear program \cite{charikar2000greedy} for DSG and Hu's linear program \cite{hu2017maintaining} for DSH (Problem \ref{problem_old}). A very recent work in \cite{chekuri2022densest} showed that the linear program technique can be also used for solving the densest supermodular subset  problem which maximizes a supermodular set function of $S$ over $|S|$. Our GDSH can be a special case of this problem as $F(S)$ is supermodular if $f_e$ is convex. It is also interested to see that our LP-GDSH can be reduced to their linear program by summarizing Inequality \ref{ineq1} over all $e\in \sset{E}$.

\subsection{A Network Flow Algorithm}
\label{sec_flow}
In this section, we introduce a parametric network flow-based approach algorithm, \emph{GDSH-Flow}, for solving GDSH when $f_e$ is non-decreasing convex. 
\emph{GDSH-Flow} is a standard binary search algorithm which finds the optimal density within range $[lb,ub]$. 
Initially, $lb=0$ and $ub=\Psi$. 
\emph{GDSH-Flow} testifies if there is a sub-hypergraph of density $\lambda=\frac{lb+ub}{2}$ by computing $\min_{S\subseteq V}(\gamma|S|-F(S))$.
If  $\min_{S\subseteq V} (\gamma|S|-F(S))\le 0$, there exists a sub-hypergraph of density $\lambda$, then we set $lb$ as $\lambda$. 
Otherwise, it indicates that $\lambda$ is larger than the optimal, we then decrease $ub$ to $\lambda$.
In order to compute $\min_{S\subseteq V} (\gamma|S|-F(S))$ for any $\lambda$, we make use of the minimum cut from a directed network flow $\gr{G}=(U,\sset{A}, \lambda)$ where $U$ and $A$ are vertex set and arc set, respectively.

\begin{algorithm}[ht!]
\DontPrintSemicolon
	\caption{Exact network flow algorithm for GDSH.}
	\label{alg_netflow}
    GDSH-Flow$(\gr{H})$\\
    \Begin{
        $lb\gets 0$,$ub\gets \Psi$ \\ 
        \While{$ub > lb$}{
            $\lambda \gets \frac{lb+ub}{2}$\\
            Build directed flow network $\gr{G}=(U,\sset{A}, \lambda)$ \\
            \If{the cost of min-cut $(X,Y)$ in $\gr{G}$ is larger than $\Psi$}{
                $ub\gets \lambda$   \\
            }\Else{
                $lb\gets \lambda$\\
            }
        }
        build directed flow network $\gr{G}=(U,\sset{A}, lb)$ \\
        compute minimum cut $(X,Y)$ from $\gr{G}$\\
        \Return{$X\cap V$}
    }
\end{algorithm}

To illustrate how to build $\gr{G}=(U,\sset{A}, \lambda)$, we need to first assume that for any $e\in \sset{E}$, $f_e(i)$ returns integers for $i=0,...,|e|$. 
This restriction does not impose any loss of generality because we can always obtain integer values by simultaneously scaling the edge-weight functions with an enough large value $M$ which is a multiple of $10$.
Then, $\gr{G}$ is built by the following steps.

\begin{itemize}
  \item Build a source $s$, a sink $t$ in $\gr{G}$, and make a copy of every vertex of $V$ in $\gr{G}$.
  \item For a vertex $v\in U\setminus \{s,t\}$, add an arc $(v,t)$ with capacity $\lambda$ to $\gr{G}$.
  \item For a hyperedge $e$ in $\gr{H}$, assume $e=\{v_0,...,v_{|e|-1}\}$. Then, add $|e|$ vertices $u^e_0,...,u^e_{|e|-1}$ to $\gr{G}$. Also, add the following arcs to $\gr{G}$.
    \begin{itemize}
      \item For each $i=0,...,|e|-1$, add an arc $(s, u^e_i)$ with capacity $(|e|-i)\alpha^e_i$.
      \item For each $i=0,...,{|e|-1}$, $j=0,...,{|e|-1}$, add an arc $(u^e_i,v_j)$ with capacity $\alpha^e_i$ that
      \[
\alpha^e_i =\left\{
\begin{array}{ll}
f_e(1)-f_e(0) \mbox{ if } i=0, \\
f_e(i+1) + f_e(i-1) - 2f_e(i) \mbox{ if } 0<i<|e|.
\end{array}
\right.
\]
    \end{itemize}
\end{itemize}

\begin{figure}[htb]
  \centering
  \includegraphics[scale=0.7]{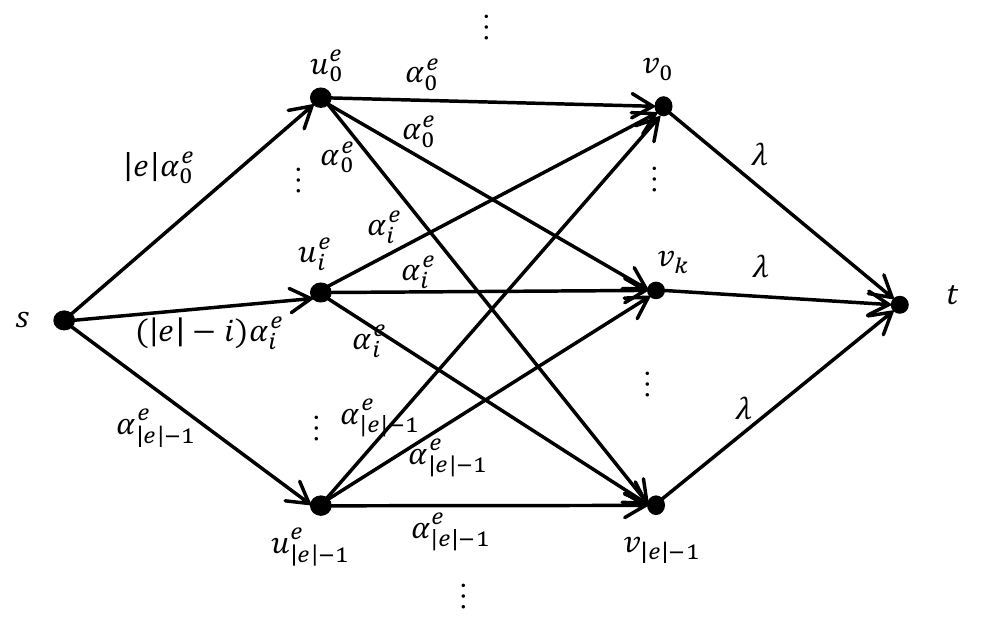}\\
  \caption{An example of directed network $\gr{G}$.}\label{fig_network}
\end{figure}

An illustrative example of $\gr{G}$ is shown in Fig. \ref{fig_network}. Clearly, $\alpha^e_i\ge 0$ for any integer $0\le i<|e|$ because $f_e$ is non-decreasing convex. We have the following statement for $\gr{G}$.
\begin{lemma}
\label{lemma-mincut}
Let $(X,Y)$ be a minimum $s,t$-cut in network $\gr{G}=(U,A,\lambda)$ such that $s\in X$ and $t\in Y$. Denote $S=X\cap V$. Then, the cost of $(X,Y)$ is equal to $\Psi+\lambda|S|-F(S)$.
\end{lemma}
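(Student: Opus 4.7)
The plan is to analyze any finite-capacity $s,t$-cut by decomposing its cost over the two kinds of nodes on the $X$-side: the original copies $v \in V$, and the auxiliary nodes $u^e_i$ attached to each hyperedge. First I would observe that the arc $(v,t)$ contributes $\lambda$ to the cut iff $v \in X$, so the total contribution of the $V$-to-$t$ arcs is exactly $\lambda|S|$, where $S = X \cap V$. For each auxiliary $u^e_i$ we then have a binary choice: putting $u^e_i \in Y$ cuts $(s, u^e_i)$ at cost $(|e|-i)\alpha^e_i$, while putting $u^e_i \in X$ cuts every arc $(u^e_i, v_j)$ with $v_j \in Y$, at total cost $(|e|-|e\cap S|)\alpha^e_i$. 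Because $f_e$ is non-decreasing convex, $\alpha^e_i \ge 0$, so we may compare these two expressions directly and conclude that the optimal placement (for fixed $S$) is $u^e_i \in X$ iff $i < |e \cap S|$, with ties broken arbitrarily.

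Next I would fix an edge $e$ and compute the contribution $C_e$ of its block to the min-cut under the optimal placement above. Writing $k = |e \cap S|$ and $d_j = f_e(j+1) - f_e(j)$, we have $\alpha^e_0 = d_0$ and $\alpha^e_j = d_j - d_{j-1}$ for $j \ge 1$, so
\[
C_e = (|e|-k)\sum_{i=0}^{k-1}\alpha^e_i \;+\; \sum_{i=k}^{|e|-1}(|e|-i)\alpha^e_i.
\]
The first sum telescopes to $(|e|-k)\,d_{k-1} = (|e|-k)(f_e(k)-f_e(k-1))$. For the second sum I would apply Abel summation: shifting the index in the $d_{i-1}$ term and collecting coefficients yields $\sum_{i=k}^{|e|-1} d_i - (|e|-k) d_{k-1} = (f_e(|e|)-f_e(k)) - (|e|-k)(f_e(k)-f_e(k-1))$. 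Adding the two pieces, the messy terms cancel and we are left with $C_e = f_e(|e|) - f_e(|e\cap S|)$. The edge cases $k=0$ and $k=|e|$ should then be checked separately, and both match the formula.

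Summing over all $e \in \sset{E}$ and combining with the $\lambda |S|$ contribution gives
\[
\textnormal{cost}(X,Y) \;=\; \lambda|S| + \sum_{e\in\sset{E}}\bigl(f_e(|e|)-f_e(|e\cap S|)\bigr) \;=\; \Psi + \lambda|S| - F(S),
\]
which is the claimed identity; since this is the value of an optimal placement of the $u^e_i$ given $S$, it is indeed the min-cut value for the partition with $X \cap V = S$.

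The main obstacle I anticipate is packaging the Abel summation cleanly: one must be careful with the index $i=0$ (where $\alpha^e_0 = d_0$ behaves slightly differently from $\alpha^e_i = d_i - d_{i-1}$) and with the boundary cases $k=0$ and $k=|e|$ so that the formula $C_e = f_e(|e|) - f_e(k)$ holds uniformly. Once the telescoping is set up correctly, convexity (giving $\alpha^e_i \ge 0$) is the only structural fact used, and it enters precisely at the step where we argue the placement $u^e_i \in X \iff i < k$ is optimal.
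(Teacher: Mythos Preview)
Your proposal is correct and follows essentially the same route as the paper: both argue that the optimal side for $u^e_i$ is determined by comparing $(|e|-i)\alpha^e_i$ with $(|e|-|e\cap S|)\alpha^e_i$, arrive at the identical expression $C_e=(|e|-k)\sum_{i=0}^{k-1}\alpha^e_i+\sum_{i=k}^{|e|-1}(|e|-i)\alpha^e_i$, and then reduce it to $f_e(|e|)-f_e(k)$. The only cosmetic difference is in that last simplification: the paper rewrites the combined sum as $\sum_{j=k}^{|e|-1}\bigl(\sum_{i=0}^{j}\alpha^e_i\bigr)=\sum_{j=k}^{|e|-1}d_j$ via a triangular rearrangement, whereas you handle the second sum by Abel summation; both are equally valid and yield the same cancellation.
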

\begin{proof}
 First, by the definition of $a^e_i$, we have,
\begin{equation*}
\small
\begin{aligned}
\sum_{i=0}^{j}{\alpha_i^e} &= f_e(1)-f_e(0) +\sum_{i=1}^{i=j}{((f_e(i+1)-f_e(i))-(f_e(i)-f_e(i-1)))} \\
&= f_e(j+1)-f_e(j)
\end{aligned}
\end{equation*}
for any integer $j<|e|$.

Second, in network $\gr{G}$, $(X,Y)$ is a minimum $s,t$-cut, $S=X \cap V$ and an edge $e=\{v_0,...,v_{|e|-1}\}$. Then, if $|S\cap e|>i$, then $u^e_i\in X$, otherwise, $(X,Y)$ is not a minimum $s,t$-cut. In contrary, if $|S\cap e|<i$, then $u^e_i\in Y$. If $|S\cap e| = i$, $u^e_i$ can be either in $X$ or $Y$ without changing the cost of cut $(X,Y)$.

Let us denote $s_e = |e\cap S|$ for simplicity. With the above observations, we finally get the cost of cut $(X,Y)$ as
\begin{equation*}
\small
\begin{aligned}
&\sum_{e\in \sset{E}}((|e|-s_e)\sum_{i=0}^{s_e-1}{a^e_i} +\sum_{i=s_e}^{|e|-1}{(|e|-i)a^e_i} ) + \lambda |S| \\
=&\sum_{e\in \sset{E}}( \sum_{i=0}^{s_e}a^e_i+...+\sum_{i=0}^{|e|-1}a^e_i) + \lambda |S| \\
=&\sum_{e\in \sset{E}}( f_e(s_e+1)-f_e(s_e)+...+ f_e(|e|)-f_e(|e|-1))+\lambda |S|  \\
=&\sum_{e\in \sset{E}}( f_e(|e|)-f_e(s_e)) + \lambda |S| \\
=&\sum_{e\in \sset{E}}f_e(|e|)-F(S)+\lambda|S|
\end{aligned}
\end{equation*},
which ends the proof.
\end{proof}


For a given $\lambda \ge 0$, Lemma \ref{lemma-mincut} indicates that the cost of minimum $s,t$-cut $(X,Y)$ is $\Psi + \min_{S\subseteq V}(\lambda |S|-F(S))$. Thus, we can decide whether there exits an $S\subseteq V$ such that $\lambda |S|-F(S)<0$ by checking whether the cost of minimum $s,t$-cut of $\gr{G}$ is smaller than $\Psi$.
Therefore, the correctness of Alg. \ref{alg_netflow} is straightforward due to Lemma \ref{lemma-mincut}. 

\begin{theorem}
If $\forall e\in \sset{E}$ in $\gr{H}=(V,\sset{E}, f)$, $f_e$ is a non-decreasing convex function,
then Alg. \ref{alg_netflow} solves GDSH in $O(mincut(p,pr)\log\Psi)$ time where $mincut(N,M)$ is the time of finding minimum $s,t$-cut from a directed flow graph with $N$ vertices and $M$ edges.
\end{theorem}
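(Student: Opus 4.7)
The plan is to combine Lemma \ref{lemma-mincut} with the standard parametric-search template. By that lemma, for any fixed $\lambda$ the minimum $s,t$-cut in $\gr{G}=(U,\sset{A},\lambda)$ has cost $\Psi+\min_{S\subseteq V}(\lambda|S|-F(S))$. Hence the decision ``does GDSH admit a density $\ge \lambda$?'' is equivalent to ``is the cost of a minimum $s,t$-cut in $\gr{G}$ at most $\Psi$?'', which is precisely the branch test inside the while-loop of Alg.~\ref{alg_netflow}. When the loop terminates, rebuilding $\gr{G}$ with $\lambda=lb$ and returning $X\cap V$ from a minimum cut $(X,Y)$ yields an $S$ with $\lambda|S|-F(S)\le 0$, i.e.\ density $\ge lb$, which matches the achieved optimum.

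For the running time I would first bound the size of $\gr{G}$. The vertex set $U$ consists of $s$, $t$, one copy of each $v\in V$, and for every hyperedge $e$ the $|e|$ auxiliary nodes $u^e_0,\dots,u^e_{|e|-1}$; this totals $2+n+\sum_{e\in\sset{E}}|e|=O(p)$ nodes. The arc set $\sset{A}$ contains one arc $(v,t)$ per vertex, $|e|$ arcs $(s,u^e_i)$ per hyperedge, and $|e|^2$ arcs $(u^e_i,v_j)$ per hyperedge, so $|\sset{A}|\le n+\sum_{e\in\sset{E}}(|e|+|e|^2)=O(pr)$. Consequently each min-cut invocation costs $O(\mathrm{mincut}(p,pr))$.

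It then remains to count the iterations of the binary search. Using the integrality assumption highlighted just before Lemma \ref{lemma-mincut} (scale all $f_e$ by a suitable integer $M$ to make every $f_e(i)$ an integer), the quantity $\lambda|S|-F(S)$ takes only values in $\tfrac{1}{|S|}\mathbb{Z}$ for each candidate $S$, and so the distinct achievable densities form a finite subset of the rationals in $[0,\Psi]$ whose consecutive gaps are at least $1/\mathrm{poly}(n)$. Therefore $O(\log \Psi)$ halvings of the interval $[lb,ub]$ suffice to isolate the optimum and stop with $lb$ equal to the maximum density, yielding the advertised total running time $O(\mathrm{mincut}(p,pr)\log\Psi)$.

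The main obstacle, and the place where most care is needed, is precisely the termination argument for the binary search: one must argue that the integer-scaling trick is legitimate (it does not change the optimal $S$, only the numerical value of the density), and that the $\log\Psi$ factor correctly captures the number of halvings needed to pin the optimum to within the resolution determined by the scaling. Everything else—correctness of the decision test and the size of $\gr{G}$—follows directly from Lemma \ref{lemma-mincut} and a line-by-line count of the construction of the flow network.
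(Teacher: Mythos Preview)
Your proposal is correct and follows essentially the same route as the paper: correctness of the decision test from Lemma~\ref{lemma-mincut}, a direct count showing $\gr{G}$ has $O(p)$ vertices and $O(pr)$ arcs, and a binary-search iteration bound of $O(\log\Psi)$ based on the integer-scaling assumption. If anything, you are more explicit than the paper, which states the theorem without a formal proof and simply remarks that correctness is ``straightforward due to Lemma~\ref{lemma-mincut}'' and that the vertex and arc counts are $n+p+2$ and $n+p+pr$; the termination/iteration bound is not argued beyond citing the integrality convention, so your discussion of the density-gap resolution goes a step further than the paper does.
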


For any parameter $\lambda$, the number of vertices and edges in  $\gr{G}=(U,\sset{A},\lambda)$ is $n+p+2$ and $n+p+pr$, respectively.
Therefore, the running time of this flow based algorithm is $O(mincut(p,pr)\log\Psi)$. 
For example, if we use the minimum $s,t$-cut algorithm in \cite{goldberg1988new}, which has running time $O(NM\log{\frac{N^2}{M}})$  and space $O(M)$, the flow based algorithm runs in time $O(npr\log(\frac{(n+p)^2}{n+p+pr})\log\Psi)$ and  space $O(pr)$.

\noindent \textbf{Remark.} Readers who are familiar with submodular optimization can realize that  $h_\lambda(S)$ is monotone submodular when $f_e$ is non-decreasing convex for any $e\in \sset{E}$. Therefore $\min_{S\subseteq V}h_\lambda(S)$ can be also solved via \emph{Submodular Function Minimization} algorithms. The  best-known  submodular function minimization algorithm runs in time $O(N^3(\log^2N)EO+N^4\polylog(N))$ where $N$ is the number of elements and $EO$ is the maximum time of evaluating the submodular function \cite{lee2015faster}. In our case, $N=n$ and $EO=O(p)$, the overall time  is $O((n^3p\log^2n+n^4\polylog(n))\log\Psi)$ which is not as efficient as our the network flow based approach.

\subsubsection{Further removing the $\log\Psi$ factor}
Inspired by the technique in \cite{kawase2018densest}, we can obtain an algorithm with time polynomial to the size of input graph and $\epsilon^{-1}$ by a little modification of \emph{GDSH-Flow}. The algorithm, as shown in Alg. \ref{alf_netflow_epsilon}, is named \emph{GDSH-Flow-$\epsilon$}, which is $(1-\epsilon)$ approximation. 

\begin{algorithm}[ht!]
\DontPrintSemicolon
	\caption{$(1-\epsilon)$ approximation algorithm for GDSH.}
	\label{alf_netflow_epsilon}
    GDSH-Flow-$\epsilon$$(\gr{H})$\\
    \Begin{
        Let $e_m=argmax_{e\in \sset{E}}{f_e(|e|)}$\\
        $lb\gets \frac{f_{e_m}(|{e_m}|)}{|e_m|}$,$ub\gets \frac{\Psi}{1}$\\
        \While{$lb<(1-\epsilon)ub$}{
            $\lambda \gets \sqrt{lb*ub}$\\
            Build directed flow network $\gr{G}=(U,\sset{A}, \lambda)$ \\
            \If{the cost of min-cut $(X,Y)$ in $\gr{G}$ is larger than $\sum_{e\in \sset{E}}f_e(|e|)$}{
                $ub \gets \lambda$\\
            }\Else{
                $lb\gets \lambda$\\
            }
        }
        build directed flow network $\gr{G}=(U,\sset{A}, lb)$ \\
        compute minimum cut $(X,Y)$ from $\gr{G}$\\
        \Return{$X\cap V$}
    }
\end{algorithm}

\begin{theorem}
\label{thm-network-appr}
If $\forall e\in \sset{E}$ in $\gr{H}=(V,\sset{E}, f)$, $f_e$ is a non-decreasing convex function, then \emph{GDSH-Flow-$\epsilon$} is a $(1-\epsilon)$ approximation algorithm with running time  $O(mincut(p,pr)\log(\epsilon^{-1}\log(rm)))$ for DHSP.
\end{theorem}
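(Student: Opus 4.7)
The plan is to decouple correctness from running-time analysis and, for both parts, to reuse the minimum-cut interpretation of Lemma \ref{lemma-mincut} already proved for the exact algorithm \emph{GDSH-Flow}.

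\textbf{Correctness.} I would maintain the invariant $lb \le \gamma^* \le ub$, where $\gamma^*$ denotes the optimal density of GDSH. At initialization, taking $S = e_m$ yields density at least $f_{e_m}(|e_m|)/|e_m| = lb_0$ (because $F(e_m) \ge f_{e_m}(|e_m|)$), so $\gamma^* \ge lb_0$; and since $F(S) \le F(V) = \Psi$ while $|S| \ge 1$ for any nonempty $S$, we have $\gamma^* \le \Psi = ub_0$. The invariant is preserved in the loop exactly as in \emph{GDSH-Flow}: by Lemma \ref{lemma-mincut}, the minimum cut of $\gr{G} = (U,\sset{A},\lambda)$ has cost $\Psi + \min_{S\subseteq V}(\lambda|S| - F(S))$, so a cut exceeding $\Psi$ certifies $\gamma^* < \lambda$ (justifying $ub \gets \lambda$), while a cut of cost at most $\Psi$ exhibits a set of density at least $\lambda$ (justifying $lb \gets \lambda$). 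When the loop exits, $lb \ge (1-\epsilon)ub \ge (1-\epsilon)\gamma^*$, and the final min-cut at parameter $lb$ returns $S = X\cap V$ of density at least $lb \ge (1-\epsilon)\gamma^*$, which is the claimed approximation guarantee.

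\textbf{Running time.} The central difference from \emph{GDSH-Flow} is that the geometric midpoint $\lambda \gets \sqrt{lb\cdot ub}$ halves $\log(ub/lb)$ per iteration rather than halving $ub - lb$. I would first bound the initial ratio: since $\Psi \le m\,f_{e_m}(|e_m|)$ and $|e_m| \le r$, we have $ub_0/lb_0 \le m|e_m| \le mr$, so $\log(ub_0/lb_0) \le \log(mr)$. After $k$ iterations, $\log(ub/lb) \le 2^{-k}\log(mr)$. Since the loop ends once $\log(ub/lb) \le -\log(1-\epsilon) = \Theta(\epsilon)$, taking $k = O(\log(\epsilon^{-1}\log(rm)))$ suffices. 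Each iteration performs one min-cut computation on a flow network with $O(p)$ vertices and $O(pr)$ arcs, giving the claimed total $O(mincut(p,pr)\log(\epsilon^{-1}\log(rm)))$.

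\textbf{Main obstacle.} The trickiest step is the running-time bookkeeping: translating ``$\log(ub/lb)$ is halved per iteration'' together with the termination criterion $lb < (1-\epsilon)ub$ into the stated asymptotic form. The key quantitative input is the estimate $-\log(1-\epsilon) = \Theta(\epsilon)$ for small $\epsilon$, which is what ensures the iteration count becomes $\log_2(\epsilon^{-1}\log(rm))$ rather than something weaker. A secondary subtlety worth flagging is that the returned set must actually realize density at least $lb$; this is not automatic but follows because the final min-cut has cost at most $\Psi$, and Lemma \ref{lemma-mincut} then yields $S = X\cap V$ with $lb\,|S| \le F(S)$.
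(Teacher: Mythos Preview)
Your proposal is correct and follows essentially the same approach as the paper: bound the initial ratio $ub_0/lb_0 \le rm$ via $\Psi \le m\,f_{e_m}(|e_m|)$ and $|e_m|\le r$, observe that the geometric midpoint square-roots $ub/lb$ (equivalently halves $\log(ub/lb)$) each iteration, and use $-\log(1-\epsilon)=\Theta(\epsilon)$ to translate the termination criterion into the claimed iteration bound. If anything, your correctness argument is more careful than the paper's, which simply asserts that the ratio is ``clearly guaranteed by the stop condition'' without spelling out the invariant $lb\le\gamma^*\le ub$ or why the final cut at parameter $lb$ actually produces a set of density at least $lb$.
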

\begin{proof}
The approximation ratio is clearly guaranteed by the stop condition of the algorithm. We mainly show that the number of \emph{while} iterations is bounded by $\log(\epsilon^{-1}\log(rm))$.
The crux is that  $\frac{ub}{lb}$ is shrunk by a square root after every iteration.
Let $i\in {1,...,i^*}$ denote the iteration number of the algorithm, $i^*$ is the last iteration number.
In the $i$th iteration, let $lb^i$ and $ub^i$ be the lower and upper bound respectively.

First, it is clear that $\frac{ub^1}{lb^1}=\frac{|e_m|\Psi}{f_{e_m}(|e_m|)}\le rm$.
Then, we have
\begin{align*}
&\frac{ub^{i+1}}{lb^{i+1}}\le max\left(\frac{ub^{i}}{\lambda^{i}}, \frac{\lambda^{i}}{lb^{i}} \right)  \\
&= max\left(\frac{ub^{i}}{\sqrt{lb^i*ub^i}}, \frac{\sqrt{lb^i*ub^i}}{lb^{i}} \right)  \\
& = \sqrt{\frac{ub^i}{lb^i}}
\end{align*}
Hence, we have $\frac{ub^{i+1}}{lb^{i+1}}\le (\frac{ub^1}{lb^1})^{\frac{1}{2^i}}$.
On the other hand, we have $\frac{ub^{i^*}}{lb^{i*}}\le \frac{1}{1-\epsilon}$.
Therefore, $i^* \in O(\log(\frac{\log(rm)}{\log(\frac{1}{1-\epsilon})}))$.
As $\lim_{\epsilon\rightarrow 0}{\frac{\epsilon}{\log(\frac{1}{1-\epsilon})}}=1$, we have  $i^*\in O(\log(\epsilon^{-1}\log(rm)))$.
Therefore, the the overall running time of is bounded by $O(mincut(p,np)\log(\epsilon^{-1}\log(rm)))$.
\end{proof}


\subsection{A Fast $\frac{1}{r}$-approximation Algorithm}
\label{approx}

We further introduce \emph{GDSH-Approx} in Alg. \ref{alg_approx} to approximate GDSH when all the edge-weight functions are non-decreasing convex.
By a little sacrifice on the accuracy, \emph{GDSH-Approx} is much faster than the above approaches.

\begin{algorithm}[ht!]
\DontPrintSemicolon
	\caption{Find the densest hyper-subgraph approximately.}
	\label{alg_approx}
    \emph{GDSH-Approx}$(\gr{H})$\\
    \Begin{
        $S \gets V$, $S' \gets V$ \\
        \For{$S\neq \emptyset$}{
            Find $v \in \argmin_{v\in S}(F(S)-F(S\setminus \{v\}))$\\
            $S\gets S\setminus \{v\}$ \\
            \If {$\frac{F(S)}{|S|} > \frac{F(S')}{|S'|}$}{
                $S'\gets S$
            }
        }
        \Return{$S'$}
    }
\end{algorithm}
\emph{GDSH-Approx} maintains a subset of vertices $S$. 
In each of the consequent iterations, \emph{GDSH-Approx} identifies $v$, a vertex by which is removed from $S$, the decrease to the total edge-weight of the sub-hypergraph induced by $S$ is minimized.
The algorithm starts with $S=V$ and stops when $S$ becomes empty.
Of all the sets $S$ during the iterations, the one maximizing $\frac{F(S)}{|S|}$ is returned.
To shown the approximate ratio of \emph{GDSH-Approx}, we first need the following observation.

\begin{lemma}
\label{lemma-degree}
Given any $S\subseteq V$ in hypergraph $\gr{H}=(V,\sset{E},f)$, $F(S)\ge \frac{1}{r}\sum_{u\in S}(F(S)-F(S\setminus\{u\}))$.
\end{lemma}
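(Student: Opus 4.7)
The plan is to unfold the right-hand side edge-by-edge and then dominate each per-edge term by a multiple of $f_e(|e\cap S|)$. Start from the observation that $F(S)-F(S\setminus\{u\}) = \sum_{e\in \sset{E}} [f_e(|e\cap S|)-f_e(|e\cap(S\setminus\{u\})|)]$, and for each edge $e$ only the vertices $u\in e\cap S$ contribute anything: if $u\in e\cap S$, then $|e\cap(S\setminus\{u\})|=|e\cap S|-1$, otherwise the difference vanishes. Swapping the order of summation therefore gives
\begin{equation*}
\sum_{u\in S}(F(S)-F(S\setminus\{u\})) \;=\; \sum_{e\in \sset{E}} |e\cap S|\,\bigl(f_e(|e\cap S|)-f_e(|e\cap S|-1)\bigr),
\end{equation*}
where by convention the term is $0$ when $|e\cap S|=0$.

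Next I would bound each summand on the right. Writing $k_e:=|e\cap S|$, note two elementary facts: first, $k_e\le |e|\le r$ by definition of the rank; second, $f_e(k_e)-f_e(k_e-1)\le f_e(k_e)$ because $f_e(k_e-1)\ge 0$ (recall $f_e$ is non-negative by the problem statement, and in fact $f_e(0)=0$ by the normalization in Section \ref{preliminaries}). Combining these two,
\begin{equation*}
k_e\bigl(f_e(k_e)-f_e(k_e-1)\bigr) \;\le\; r\cdot f_e(k_e),
\end{equation*}
and this trivially also holds when $k_e=0$.

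Summing over all $e\in\sset{E}$ then yields
\begin{equation*}
\sum_{u\in S}(F(S)-F(S\setminus\{u\})) \;\le\; r\sum_{e\in\sset{E}} f_e(|e\cap S|) \;=\; r\,F(S),
\end{equation*}
which is exactly the inequality in the lemma after dividing by $r$. There is no real obstacle here; the only thing to be careful about is the bookkeeping in the swap of summation and the edge cases $k_e\in\{0,1\}$. Notice incidentally that the argument uses only that $f_e$ is non-decreasing and non-negative with $f_e(0)=0$; convexity is not needed for this particular bound, though it will of course be used when this lemma is plugged into the analysis of \emph{GDSH-Approx}.
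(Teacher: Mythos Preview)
Your proof is correct and follows essentially the same approach as the paper: both arguments observe that only vertices $u\in e\cap S$ contribute to the per-edge marginal, bound $f_e(k_e)-f_e(k_e-1)\le f_e(k_e)$ via non-negativity of $f_e$, and use $|e\cap S|\le r$ before summing over $e\in\sset{E}$. Your remark that convexity is not actually used in this lemma is also accurate and worth noting.
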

\begin{proof}
For any hyperedge $e$, it is clear that if vertex $u\in e\cap S$, $f_e(|e\cap S|)-f_e(|e\cap (S\setminus \{u\})|) \le  f(|e\cap S|)$  and if $u\notin e\cap S$, $f_e(|e\cap S|)-f_e(|e\cap (S\setminus \{u\})|) =0$. Hence, the following inequality holds.
\[
\begin{array}{ll}
\sum_{u\in S}{(f_e(|e\cap S|)-f_e(|e\cap (S\setminus \{u\})|))}\le |S\cap e|{f_e(|e\cap S|)} \le rf_e(|e\cap S|)
\end{array}
\]
By summarizing the above inequalities for all $e\in \sset{E}$, we have
\[
\sum_{u\in S}(F(S)-F(S\setminus\{u\}) \le rF(S)
\]
which completes the proof.
\end{proof}

Then, we have the following result for \emph{GDSH-Approx}.
\begin{theorem}
\label{thm-approx}
If $\forall e\in \sset{E}$ in $\gr{H}=(V,\sset{E}, f)$, $f_e$ is a non-decreasing convex function, \emph{GDSH-Approx} is a $\frac{1}{r}$-approximate algorithm.
\end{theorem}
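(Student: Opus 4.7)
The plan is to mimic Charikar's peeling analysis for the classical densest subgraph problem, but replace combinatorial degree arguments with marginal contributions $F(S)-F(S\setminus\{u\})$ and exploit the supermodularity of $F$ that follows from non-decreasing convexity of every $f_e$ (as noted in Section \ref{preliminaries}). Let $S^*$ be an optimal solution with density $\gamma^*=F(S^*)/|S^*|$, and let $S_0=V\supseteq S_1\supseteq\cdots$ be the chain of sets produced by \emph{GDSH-Approx}, with $v_t$ the vertex peeled at iteration $t$. Since every vertex of $V$ is eventually removed, there is a well-defined first iteration $t^*$ at which the peeled vertex $v_{t^*}$ lies in $S^*$; at this iteration $S^*\subseteq S_{t^*}$ by construction. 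The goal is to show that the density of this particular intermediate set $S_{t^*}$ already satisfies $F(S_{t^*})/|S_{t^*}|\ge \gamma^*/r$, which suffices because the algorithm returns the best candidate $S'$ it ever maintains.

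To bound the marginal $F(S_{t^*})-F(S_{t^*}\setminus\{v_{t^*}\})$ from below, I would first use a standard optimality argument on $S^*$: if some $u\in S^*$ had marginal $F(S^*)-F(S^*\setminus\{u\})<\gamma^*$, then $S^*\setminus\{u\}$ would have strictly greater density than $S^*$, contradicting optimality. Hence every $u\in S^*$ satisfies $F(S^*)-F(S^*\setminus\{u\})\ge\gamma^*$, and in particular this holds for $u=v_{t^*}$. Now I would transfer this lower bound from $S^*$ to $S_{t^*}$ via supermodularity: setting $A=S^*\setminus\{v_{t^*}\}\subseteq S_{t^*}\setminus\{v_{t^*}\}=B$ with $v_{t^*}\notin B$, supermodularity gives $F(S^*)-F(S^*\setminus\{v_{t^*}\})\le F(S_{t^*})-F(S_{t^*}\setminus\{v_{t^*}\})$, so the marginal at $v_{t^*}$ in $S_{t^*}$ is at least $\gamma^*$.

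For the matching upper bound, I would invoke the greedy rule together with Lemma \ref{lemma-degree}. Because $v_{t^*}$ minimizes the marginal over all $u\in S_{t^*}$, we have $|S_{t^*}|\,(F(S_{t^*})-F(S_{t^*}\setminus\{v_{t^*}\}))\le\sum_{u\in S_{t^*}}(F(S_{t^*})-F(S_{t^*}\setminus\{u\}))$, and Lemma \ref{lemma-degree} applied to $S_{t^*}$ bounds that sum by $rF(S_{t^*})$. Combining with the lower bound from the previous paragraph yields $\gamma^*\le r\,F(S_{t^*})/|S_{t^*}|$, i.e.\ $F(S_{t^*})/|S_{t^*}|\ge\gamma^*/r$, which completes the argument since $F(S')/|S'|\ge F(S_{t^*})/|S_{t^*}|$.

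The only subtle step is the supermodularity transfer of the marginal from $S^*$ to $S_{t^*}$: without it, the greedy choice would only be compared against vertices in $S_{t^*}$ rather than against $S^*$, and we would lose the link to $\gamma^*$. Everything else (the optimality lower bound on marginals in $S^*$ and the averaging via Lemma \ref{lemma-degree}) is direct, and the small edge case $|S^*|=1$ is handled trivially since the marginal at the single vertex equals $\gamma^*$ by convention $f_e(0)=0$.
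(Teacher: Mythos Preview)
Your proposal is correct and follows essentially the same argument as the paper: the optimality lower bound $F(S^*)-F(S^*\setminus\{v\})\ge\gamma^*$, the supermodularity transfer from $S^*$ to the current set $S_{t^*}\supseteq S^*$, the greedy choice of $v_{t^*}$, and the averaging via Lemma \ref{lemma-degree} all appear in the paper's proof in the same order. The only difference is cosmetic: the paper first pushes the bound $\ge\gamma^*$ to \emph{every} marginal in $S_{t^*}$ (via greedy) before summing, whereas you bound just the minimum marginal and compare it to the average, which is equivalent.
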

\begin{proof}
Assume $S^*\subseteq V$ is a set of density $\lambda^*$ in $\gr{H}$. Due to the optimality of $S^*$, for any $v\in S^*$,
\begin{equation*}
    \lambda^*=\frac{F(S^*)}{|S^*|} \ge \frac{F(S^*\setminus \{v\}))}{|S^*|-1} =\frac{F(S^*) - (F(S^*)-F(S^*\setminus\{v\})))}{|S^*|-1}.
\end{equation*}
With simple elementary transformations of the above inequality, we have $F(S^*)-F(S^*\setminus\{v\}) \ge \lambda^*$.

Now, let us consider the iteration of \emph{GDSH-Approx} before the first vertex  of $S^*$, say $v$, is removed. Call the current set of this iteration $S'$. So, $S^*\subseteq S'$. We have
\[
\begin{array}{ll}
\forall u\in S', &F(S')-F(S'\setminus\{u\}) \ge F(S')-F(S'\setminus\{v\}) \\
&\ge F(S^*)-F(S^*\setminus\{v\}) \ge \lambda^*
\end{array}
\]
where the first inequality follows from greedy strategy in the algorithm and the second inequality follows from the supermodularity of $F(S)$ (since all edge-weight functions are convex). Now, combining Lemma \ref{lemma-degree}, we conclude that
\begin{equation*}
\small
\begin{aligned}
   F(S') &\ge \frac{1}{r}\sum_{u\in S'}(F(S')-F(S'\setminus\{u\})) \ge \sum_{u\in {S'}}\frac{\lambda^*}{r} = \frac{|S'|\lambda^*}{r}\\
\end{aligned}
\end{equation*}
Therefore, $\frac{F(S')}{|S'|}\ge \frac{\lambda^*}{r}$. Since the algorithm returns a set of maximum   density of all the iterations, the approximation ratio follows.
\end{proof}

The number of iterations of Alg. \ref{alg_approx}  is $n$, the time to evaluate $F(S)$ is $O(p)$ in each iteration. Therefore, the running time of a simple implementation of this algorithm can be $O(n^2p)$. Using a minimum-heap to \cite{cormen2009introduction} to maintain the vertices in $S$, we can reduce the time to $O(pr\log{n})$.


\subsubsection{Further reducing the number of iterations}
\label{streaming}

Currently, the number of iterations of \emph{GDSH-Approx} is clearly $\Theta(n)$. 
Motivated by the work in \cite{bahmani2012densest}, we provide a method of revising Alg. \ref{alg_approx} such that the number of iterations reduces to the logarithmic scale. 
The new approximation algorithm is called \emph{GDSH-Para} which is described in Alg. \ref{alg_para}. 
\emph{GDSH-Para} would be very efficient in processing large hypergraphs in the parallel 
processing system because it only have a small number of dependable iterations.

\begin{algorithm}[H]
\DontPrintSemicolon
	\caption{A parallel algorithm to find the densest sub-hypergraph.}
	\label{alg_para}
    \emph{GDSH-Para}$(\gr{H})$\\
    \Begin{
        $S \gets V$, $S' \gets V$ \\
        \For{$S\ne \emptyset$}{
            $\Delta \gets \{v\in S:  F(S)-F(S\setminus\{v\})\le r(1+\epsilon)\frac{F(S)}{|S|}\}$\\
            $S\gets S\setminus \Delta$\\
            \If {$\frac{F(S)}{|S|}>\frac{F(S')}{|S'|}$}{
                $S'\gets S$
            }
        }
        \Return{$S'$}
    }
\end{algorithm}

\begin{theorem}
\label{thm-para}
If $\forall e\in \sset{E}$ in $\gr{H}=(V,\sset{E}, f)$, $f_e$ is a non-decreasing convex function,
then \emph{GDSH-Para} is a $\frac{1}{r(1+\epsilon)}$-approximation with $O(\log_{1+\epsilon}n)$ iterations.
\end{theorem}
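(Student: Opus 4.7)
The plan is to establish the two claims of the theorem separately: the $O(\log_{1+\epsilon} n)$ bound on the iteration count and the $\frac{1}{r(1+\epsilon)}$ approximation ratio. Both arguments parallel the reasoning in Theorem \ref{thm-approx}, but the batch-removal threshold $r(1+\epsilon)\frac{F(S)}{|S|}$ replaces the ``minimum marginal'' criterion, so I need to reshape each step accordingly.

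For the iteration count, I would apply Lemma \ref{lemma-degree} to the current working set $S$ to get $\sum_{u\in S}(F(S)-F(S\setminus\{u\})) \le rF(S)$. A Markov-style averaging argument then shows that the set of vertices whose marginal contribution exceeds $r(1+\epsilon)\frac{F(S)}{|S|}$ has size at most $\frac{|S|}{1+\epsilon}$; equivalently, $|S \setminus \Delta| \le \frac{|S|}{1+\epsilon}$. Hence $|S|$ contracts by a factor of at least $(1+\epsilon)$ per round, which yields the $O(\log_{1+\epsilon} n)$ bound (the loop is guaranteed to terminate since a lone vertex $v$ satisfies $F(\{v\}) \le r(1+\epsilon)F(\{v\})$ and therefore joins $\Delta$).

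For the approximation guarantee, let $S^*$ be an optimal vertex set with density $\lambda^*$, so the standard peeling inequality gives $F(S^*) - F(S^*\setminus\{v\}) \ge \lambda^*$ for every $v \in S^*$. I would consider the first iteration in which some vertex $v \in S^*$ enters $\Delta$, and let $\tilde{S}$ denote the working set at the start of that iteration, so $S^* \subseteq \tilde{S}$. Since each $f_e$ is non-decreasing convex, $F$ is supermodular, hence the marginal contribution of $v$ is monotone in the surrounding set: $F(\tilde{S}) - F(\tilde{S}\setminus\{v\}) \ge F(S^*) - F(S^*\setminus\{v\}) \ge \lambda^*$. Simultaneously, membership of $v$ in $\Delta$ enforces $F(\tilde{S}) - F(\tilde{S}\setminus\{v\}) \le r(1+\epsilon)\frac{F(\tilde{S})}{|\tilde{S}|}$. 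Chaining the two bounds gives $\frac{F(\tilde{S})}{|\tilde{S}|} \ge \frac{\lambda^*}{r(1+\epsilon)}$, and since the algorithm returns the densest set encountered across iterations, the desired approximation ratio follows.

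The only delicate step is the Markov-style averaging for the contraction of $|S|$, but it reduces to a one-line computation from Lemma \ref{lemma-degree}; the rest is a direct adaptation of the supermodularity trick already used in Theorem \ref{thm-approx}, so I do not anticipate any genuine obstacle beyond bookkeeping.
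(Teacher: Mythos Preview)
Your proposal is correct and follows essentially the same argument as the paper: the iteration bound is obtained from Lemma \ref{lemma-degree} via exactly the averaging (Markov-type) step you describe, and the approximation ratio is derived by combining supermodularity with the threshold condition at the first iteration in which some optimal vertex enters $\Delta$. The only cosmetic difference is that the paper writes out the chain $F(S)\ge \sum_{u\in S}\frac{F(S)-F(S\setminus\{u\})}{r} > r(1+\epsilon)\frac{F(S)}{|S|}\cdot\frac{|S\setminus\Delta|}{r}$ explicitly rather than invoking a Markov-style phrasing, but the content is identical.
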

\begin{proof}
By the proof of Theorem \ref{thm-approx}, for any vertex $v$ in a optimal solution $S^*$, $F(S^*)-F(S^*\setminus\{v\})\ge \lambda^*$ where $\lambda^*$ is the maximum density. Let us consider the pass before a first vertex $v$ from $S^*$ is removed in the algorithm. Denote the set as $S'$. Similar to the proof of Theorem \ref{thm-approx}, we still have $F(S')-F(S'\setminus\{v\})\ge F(S^*)-F(S^*\setminus\{v\})\ge \lambda^*$ due to the supermodularity of $F$ and the optimality of $S^*$.
Then,
\[
\frac{F(S')}{|S'|}\ge \frac{F(S')-F(S'\setminus\{v\})}{r(1+\epsilon)}\ge \frac{\lambda^*}{r(1+\epsilon)}
\]
where the first inequality is a direct result of the strategy in Line 5 in the algorithm. Hence, we obtain the approximation ratio.

We now estimate the maximum number of iterations. At each iteration, for the current set $S$,
\begin{equation*}
\begin{aligned}
F(S) &\ge \sum_{u\in S}\frac{F(S)-F(S\setminus\{u\})}{r} \\
    &= \sum_{u\in \Delta}\frac{F(S)-F(S\setminus\{u\})}{r} + \sum_{u\in S\setminus \Delta}\frac{F(S)-F(S\setminus\{u\})}{r} \\
    &> r(1+\epsilon)\frac{F(S)}{|S|} \cdot\frac{|S\setminus \Delta|}{r} \\
\end{aligned}
\end{equation*}
 where the first inequality follows from Lemma \ref{lemma-degree}, the second follows from the fact that any $u\in S\setminus \Delta$ satisfies $F(S)-F(S\setminus\{u\}) >  r(1+\epsilon)\frac{F(S)}{|S|}$.
Thus, $|S\setminus \Delta|<\frac{1}{1+\epsilon}|S|$, indicating that the size of $S$ decreases by a factor at least $\frac{1}{1+\epsilon}$ during each iteration. Therefore, the algorithm stops in $O(\log_{1+\epsilon}n)$ iterations.
\end{proof}

Like GDSH-Approx, the space consumption of \emph{GDSH-Para} is $O(n)$ if the minimum-heap data structure is used.

\section{Non-convex Edge-weight Functions}

In this section, we investigate GDSH when some of the edge-weight functions are not non-decreasing convex.

\begin{theorem}
\label{thm-allconcave}
If $\forall e\in \sset{E}$ in $\gr{H}=(V,\sset{E}, f)$, $f_e$ is a non-decreasing concave function, the solution of \emph{GDSH-Approx} is $\{v\}$ where $v=\argmax_{u\in V}{\sum_{e\in \sset{E}:u\in e}f_e(1)}$.
\end{theorem}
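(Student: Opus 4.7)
The plan is to split the argument into three parts: first, characterize the optimum of GDSH; second, show the density tracked by \emph{GDSH-Approx} is monotone non-decreasing under peeling; third, identify the final surviving vertex with the prescribed $v$.

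For the first part, I would use that a non-decreasing concave function with $f_e(0)=0$ has $f_e(k)/k$ non-increasing in $k$, giving the chord bound $f_e(k) \leq k\cdot f_e(1)$ for every $0 \leq k \leq |e|$. Summing over all edges,
\[
F(S) = \sum_{e \in \sset{E}} f_e(|e \cap S|) \leq \sum_{e \in \sset{E}} |e \cap S|\, f_e(1) = \sum_{u \in S} \sum_{e \ni u} f_e(1) = \sum_{u \in S} F(\{u\}),
\]
so $F(S)/|S| \leq \max_u F(\{u\}) = F(\{v\})$ with $v = \argmax_u \sum_{e \ni u} f_e(1)$. Hence the singleton $\{v\}$ is itself an optimal GDSH solution.

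For the second part, concave $f_e$ makes $F$ monotone submodular, so $\sum_{u \in S}(F(S)-F(S\setminus\{u\})) \leq F(S)$ by telescoping $F(S)$ along any enumeration of $S$ and applying submodularity edge-by-edge. The minimum deletion marginal thus lies below the average $F(S)/|S|$, and peeling the min-marginal vertex weakly increases the density ratio. Therefore $F(S)/|S|$ is non-decreasing through \emph{GDSH-Approx}'s peeling, and the maximum density recorded is attained at the last iteration with $|S|=1$, making the returned $S'$ a singleton.

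Finally, I would identify that singleton with $\{v\}$. At the two-vertex stage $\{x,y\}$, a direct calculation gives $\mu(x,\{x,y\}) - \mu(y,\{x,y\}) = F(\{y\}) - F(\{x\})$, so the retained vertex is the one with the larger $F(\{\cdot\})$. Tracing backwards by induction, the vertex $v$ with maximum $F(\{u\})$ over $V$ should survive every peel, giving $S' = \{v\}$. The main obstacle is the inductive step for $|S| \geq 3$: one must show $v$ is never the unique minimum-marginal vertex, which reduces to comparing $F(T \cup \{v\})$ with $F(T \cup \{u\})$ for $T = S \setminus \{v, u\}$, exploiting the per-edge inequality $f_e(|e \cap T|+1) - f_e(|e \cap T|) \leq f_e(1)$ from concavity together with $F(\{v\}) \geq F(\{u\})$ to rule out $v$ being strictly minimum.
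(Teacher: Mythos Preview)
Your Part~1 is correct and is essentially what the paper proves: from submodularity (equivalently, your chord bound $f_e(k)\le k f_e(1)$) one gets $F(T)\le\sum_{u\in T}F(\{u\})$, hence $F(T)/|T|\le\max_u F(\{u\})=F(\{v\})$. The paper's written argument stops exactly here, concluding only that the \emph{optimum of GDSH} is the singleton $\{v\}$; it never argues anything about the trajectory of \emph{GDSH-Approx}.

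Your Parts~2 and~3 therefore go beyond the paper, and the gap you flag in Part~3 is fatal: the inductive claim that $v$ is never the unique minimum-marginal vertex is false, so the literal statement that \emph{GDSH-Approx} outputs $\{v\}$ cannot be proved. A concrete counterexample: take $V=\{v,a,b\}$ with hyperedges $e_1=\{v,a\}$, $e_2=\{v,b\}$, $e_3=\{a,b\}$ and concave non-decreasing weights $f_{e_1}=f_{e_2}$ given by $(0,10,10)$ and $f_{e_3}$ given by $(0,9,17)$. Then $F(\{v\})=20>19=F(\{a\})=F(\{b\})$, so $v$ is the prescribed argmax. But at $S=V$ one has $F(V)=37$, $F(\{a,b\})=37$, $F(\{v,a\})=F(\{v,b\})=29$, so the marginals are $0$ for $v$ and $8$ for $a,b$; \emph{GDSH-Approx} removes $v$ first and ultimately returns $\{a\}$ or $\{b\}$ with density $19<20$. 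Thus the per-edge inequality you propose, $f_e(|e\cap T|+1)-f_e(|e\cap T|)\le f_e(1)$, bounds each marginal from above but gives no usable lower bound on the marginal of $v$, and the comparison $F(T\cup\{v\})\ge F(T\cup\{u\})$ simply fails in general. Your Part~2 (monotone density under peeling) is correct and shows the output is always a singleton, but not which one; the paper's proof, by contrast, only certifies the GDSH optimum and does not attempt your stronger conclusion.
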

\begin{proof}
If for every $e\in \sset{E}$, $f_e$ is non-decreasing concave, then $F(S)$ is a monotone submodular function. Besides, $F(\emptyset)=0$ because $f_e(0)=0$.
We first claim that, for any unit vertex set $S$ that $|S|=1$,  $S\subseteq T\subseteq V$, $F(S)\ge \frac{F(T)}{|T|}$ holds.
To verify the claim, let us assume  $S=\{v_1\}$ and $T=\{v_1, ...,v_p\}$ without loss of generality, where $p\ge 1$ is the size of $T$. 
By submodularity, we have
\[
\begin{array}{ll}
F(T)-F(\{v_1,...,v_{p-1}\}) \le F(S)-F(\emptyset) \\
F(\{v_1,...,v_{p-1}\})-F(\{v_1,...,v_{p-2}\}) \le F(S)-F(\emptyset) \\
\ldots\\
F(\{v_1\})-F(\emptyset) \le F(S)-F(\emptyset)\\
\end{array}
\]
By adding up  the above inequalities, we obtain $F(T)-F(\emptyset) \le p(F(S)-F(\emptyset))$.
As $F(\emptyset)=0$,we have $F(S)\ge \frac{F(T)}{p}= \frac{F(T)}{|T|}$

Now, it is not hard to see that the optimal solution to GDSH is a set with one vertex $v=\argmax_{u\in V}{\sum_{e\in \sset{E}:u\in e}f_e(1)}$.

\end{proof}

On the other hand,  if there are some edge-weight function $f_e$ that is (non-monotonic) concave in $\gr{H}=(V,\sset{E}, f)$, then we have the following NP-hardness result.
\begin{theorem}
\label{thm-nph-subm}
Given a hypergraph $\gr{H}=(V,\sset{E},f)$, if for some $e\in \sset{E}$ $f_e$ is concave and for other $e\in\sset{E}$, $f_e$ is non-decreasing convex, then GDSH is NP-hard.
\end{theorem}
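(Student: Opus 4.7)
I would reduce from the NP-hard Max-Cut problem. Given a Max-Cut instance $G=(V,E)$ with $n=|V|$, the plan is to build a hypergraph $\gr{H}=(V',\sset{E}',f')$ on the doubled vertex set $V'=\{v^+,v^-:v\in V\}$ using two kinds of hyperedges. For each $v\in V$, add a \emph{concave twin} hyperedge $\{v^+,v^-\}$ with weight function $f_{\mathrm{tw}}(0)=0,\ f_{\mathrm{tw}}(1)=W,\ f_{\mathrm{tw}}(2)=0$, which is concave (non-increasing differences $W,-W$) and non-monotone. For each $\{u,v\}\in E$, add two \emph{convex cut} hyperedges $\{u^+,v^-\}$ and $\{u^-,v^+\}$ with weight function $f_{\mathrm{cut}}(0)=f_{\mathrm{cut}}(1)=0,\ f_{\mathrm{cut}}(2)=N$, which is non-decreasing and convex. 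With appropriate choices of $W$ and $N$, the two families together encode ``pick exactly one copy of each vertex'' (twins) and ``reward cut edges'' (cuts).

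Given any $S\subseteq V'$, set $A=\{v:v^+\in S,\,v^-\notin S\}$ and $B=\{v:v^-\in S,\,v^+\notin S\}$. For $W$ chosen large enough relative to any possible cut contribution, the optimal $S$ must satisfy $A\cap B=\emptyset$ and $A\cup B=V$, i.e., every vertex contributes exactly one copy, so $|S|=n$. Under this ``balanced'' configuration, the twin hyperedges contribute $Wn$ and each edge $\{u,v\}\in E$ contributes $N$ iff it is cut by the partition $(A,V\setminus A)$, giving $F(S)=Wn+N\cdot\mathrm{cut}_G(A)$ and maximum density $W+N\cdot\mathrm{MaxCut}(G)/n$. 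From this value $\mathrm{MaxCut}(G)$ is recovered by a single call to a GDSH oracle, establishing NP-hardness.

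\textbf{The main obstacle} is to verify that the balanced configuration is indeed optimal, so that the denominator $|S|$ in the density ratio is under control. Two failure modes must be ruled out: (a) some vertex contributing both copies or zero copies to $S$, which loses a twin bonus and changes $|S|$ in an uncontrolled way; and, more subtly, (b) a \emph{small} $S$ that exploits a locally dense sub-structure of $G$ to achieve a higher ratio $F(S)/|S|$ than the balanced value $W+N\cdot\mathrm{MaxCut}/n$. Failure mode (a) is handled by taking $W$ much larger than $N|E|$, so that each lost twin bonus strictly dominates every possible cut gain. Failure mode (b) is handled by scaling $N$ with $n$ and, if necessary, pre-padding $G$ with sufficiently many disjoint copies of itself so that $\mathrm{MaxCut}(G)/n$ strictly exceeds the density of any small piece of $\gr{H}$. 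Since both adjustments are polynomial in the input size, the reduction is a valid polynomial-time Karp reduction and establishes that GDSH is NP-hard.
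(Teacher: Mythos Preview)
Your reduction has the right shape, but failure mode (b) is a real gap and your proposed fix does not close it. Among balanced configurations on a subset $T\subseteq V$ (one copy per vertex of $T$), your density is exactly $W+N\cdot\mathrm{cut}_{G[T]}(A)/|T|$, so a GDSH oracle would return the pair $(T,A)$ maximising $\mathrm{cut}_{G[T]}(A)/|T|$, not $\mathrm{MaxCut}(G)/n$. These can differ: if $G$ is a triangle together with an isolated vertex, the triangle alone gives ratio $2/3$ while $\mathrm{MaxCut}(G)/n=2/4$. Padding with disjoint copies of $G$ changes neither quantity (the optimal $T$ can always be taken inside a single copy, and $\mathrm{MaxCut}/n$ is invariant), and scaling $N$ does not alter the ordering. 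So as stated you are reducing from ``maximise cut over size of the cut set'' rather than from Max-Cut, and you have not argued that this auxiliary problem is NP-hard.

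The paper's reduction fixes precisely this point with one additional gadget you are missing: a single large \emph{convex} hyperedge $e_n$ on all $2n$ vertices, with threshold-type weight $f_{e_n}(i)=0$ for $i<n$ and $f_{e_n}(i)=n^2(i-n+1)$ for $i\ge n$. This injects a bonus of order $n^2$ into $F(S)$ only once $|S|\ge n$, which together with the $n^2$-weighted twin edges forces every optimum to satisfy $|S|=n$ exactly; with the denominator pinned, the cut value is read off directly. Your convex cut-edge encoding via $\{u^+,v^-\},\{u^-,v^+\}$ is fine and would also work once such a size-anchoring hyperedge is added; the missing idea is the anchor, not the cut gadget.
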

\begin{proof}
We reduce the well-known NP-hard problem, max-cut, to GDSH that edge-weight functions contain both convex and concave functions.

Given an unweighted graph $\gr{H}=(V, \sset{E})$ where $n=|V|$,  the max-cut problem asks to find $T\subseteq V$ such that $cut_{\gr{H}}(T) = |\{e \in \sset{E} : |e\cap T| = 1\}|$ is maximized. To show the reduction, we build an edge-weighted hypergraph $\gr{H}^* = (V^*, \sset{E}^*, f^*)$ which includes both concave and convex edge-weight functions.
\begin{itemize}
  \item Make two disjoint copies of $\gr{H}$ of the same vertex and (hyper)edge sets. Denote the two copies as $\gr{H}' = (V', \sset{E}', f')$ and $\gr{H}'' = (V'', \sset{E}'', f'')$. For each hyperedge $e$ in both $\gr{H}'$ and $\gr{H}''$, set $f_e(i) = 1$ if $i = 1$ and $f_e(i) = 0$ for all $i\neq 1$.
  \item For vertex $v\in V$, insert a hyperedge $e_v = \{v', v''\}$ where $v'$ and $v''$ are the two copies of $v$ in $\gr{H}'$ and $\gr{H}''$, respectively. For each hyperedge $e_v$, set $f_{e_v}(i) = n^2$ if $i = 1$ and $f_{e_v}(i) = 0$ for all $i\neq 1$. Denote  the set of these hyperedges as $\sset{E}'''$.
  \item Add a hyperedge $e_n=V'\cup V''$ and assign the edge-weight function
\[f_{e_n}(i) =\left\{
\begin{array}{ll}
0,  \mbox{ if } 0\le i < n \\
n^2(i-n+1),  \mbox{ if } n\le i\le 2n.
\end{array}
\right.
\]
\end{itemize}
to hyperedge $e_n$.

In summary, $\gr{H}^*$ includes a set of $2n$ vertices and four sets of hyperedges, $\sset{E}'$, $\sset{E''}$, $\sset{E'''}$ and $\{e_n\}$.
The edge-weights functions of hyperedges in $\sset{E}'$, $\sset{E''}$, $\sset{E'''}$ are concave but $f_{e_n}$ is convex. Given $\gr{H}^*$, GDSH is to find a set $S \subseteq V^*$ such that $\frac{F(S)}{|S|}=\frac{{cut_{\gr{H'}}(S) + cut_{\gr{H''}}(S)} + n^2|\{e\in \sset{E}''':|e\cap S|=1\}| + f_{e_n}(|S|)}{|S|}$ is maximized. (To be precise, $cut_{\gr{H'}}(S)$ and $cut_{\gr{H''}}(S)$ are the abbreviations of $cut_{H'}(S\cap V')$ and $cut_{H''}(S\cap V'')$, respectively.)

Suppose that $S^*$ is an optimal solution of GDSH in $H^*$. We first demonstrate that the size of $S^*$ is equal to $n$.
Assume $|S^*|<n$.
\begin{equation*}
\small
\begin{aligned}
\frac{F(S^*)}{|S^*|}&= \frac{cut_{\gr{H'}}(S^*) + cut_{\gr{H''}}(S^*) + n^2|\{e\in \sset{E}''':|e\cap S^*|=1\}| + 0}{|S^*|} \\
       &\le \frac{cut_{\gr{H'}}(S^*) + cut_{\gr{H''}}(S^*) + n^2|S^*|}{|S^*|} \\
       &\le \frac{cut_{\gr{H'}}(S^*) + cut_{\gr{H''}}(S^*)}{|S^*|} + n^2\\
       &< \frac{n\cdot |S^*|}{|S^*|} + n^2\\
       &=n+n^2
\end{aligned}
\end{equation*}
On the other hand, if $|S^*|=n$, $\frac{F(S^*)}{|S^*|} =n^2+n+ \frac{cut_{\gr{H'}}(S^*) + cut_{\gr{H''}}(S^*)}{n}$ which is not smaller than $n+n^2$. Hence, $|S^*|\ge n$ .

Now suppose $|S^*| > n$. Then,
\begin{equation*}
\small
\begin{aligned}
\frac{F(S^*)}{|S^*|} &= \frac{cut_{\gr{H'}}(S^*) + cut_{\gr{H''}}(S^*) + n^2|\{e\in \sset{E}''':|e\cap S^*|=1\}| + n^2(|S^*|-n+1)}{|S^*|} \\
       &\le \frac{cut_{\gr{H'}}(S^*) + cut_{\gr{H''}}(S^*) + n^2(2n-|S^*|)+n^2(|S^*|-n+1)}{|S^*|} \\
       &< \frac{n|S^*|}{|S^*|} + \frac{n^2(n+1)}{|S^*|}\\
       &\le n+n^2
\end{aligned}
\end{equation*}
Clearly, the density when $|S^*|>n$ is still smaller than the density when $|S^*|=n$. Therefore, the size of optimal solution $S^*$ is $n$.

Now, we show that for any vertex $u\in V$, the two copies $u',u''\in V^*$ satisfy either $u'\in S^*$ and $u''\notin S^*$ or $u'\notin S^*$ and $u''\in S^*$.
 Suppose that $u'$ and $u''$ are both in $S^*$. Then for hyperedge $e = \{u', u''\}$, $f_e(|S^*\cap e|) = 0$. By removing $u'$(or $u''$) from $S^*$, we can get a larger density for set $S^*$, which contradicts the fact that $S^*$ is optimal. If we assume neither $u'$ nor $u''$ in $S^*$, we can also find a contradiction by adding $u'$(or $u''$) to $S^*$. Thus, either $u'$ or $u''$ is in $S^*$ but not both.

With the above two properties of the optimal solution $S^*$, we can state that the optimal solution $\frac{F(S^*)}{|S^*|} =n^2+n+\frac{2\max_{S\subseteq V}cut_{\gr{H}}(S)}{n}$. Therefore, the max-cut problem can be reduced to GDSH where the input graph includes both convex and concave edge-weight functions.
\end{proof}

\section{Conclusion}
It is known that the (edge-weighted) densest sub-hypergraph problem is important in many data-mining applications. 
In this paper, we studied this problem with respect to different properties of the edge weight functions and formalized the Generalized Densest Sub-Hypergraph problem (GDSH).
We show that GDSH with non-decreasing convex edge-weight functions can be solved efficiently by a linear program-based approach, a network flow-based approach and a fast greedy approximation algorithm. We also investigated GDSH for some other cases where edge-weight function are not always non-decreasing convex.

In the future, it it would be interesting to extend the study from multiple dimensions.
First, one could consider more properties about the edge weight functions like submodularity, or, one could also investigate faster algorithms when the edge-weight functions are identical. 
Besides, the GDSH problem under some constraints wold be another interesting topic. 
For example, the problem of finding densest subgraph with at least $k$ vertices is NP-hard, but 2-approximated was given in \cite{khuller2009finding}. So, it could be possible to investigate the GDSH with different size constraint.


%
%
%
\bibliographystyle{plain}
\bibliography{hyper}

\end{document}